\newtheorem{theorem}{Theorem}
\def\@email#1#2{%
 \endgroup
 \patchcmd{\titleblock@produce}
  {\frontmatter@RRAPformat}
  {\frontmatter@RRAPformat{\produce@RRAP{*#1\href{mailto:#2}{#2}}}\frontmatter@RRAPformat}
  {}{}
}%
\begin{document}

\preprint{AIP/123-QED}

\title[Non-normality, optimality and synchronization]{Non-normality, optimality and synchronization}
\author{Jeremie Fish}
 \altaffiliation[Also at ]{Clarkson Center for Complex Systems Science}
\author{Erik Bollt}
\altaffiliation[Also at ]{Clarkson Center for Complex Systems Science}
 \email{jafish@clarkson.edu}
\affiliation{ 
Department of Electrical and Computer Engineering, Clarkson University
}%

\date{\today}

\begin{abstract}
It has been recognized for quite some time that for some matrices the spectra are not enough to tell the complete story of the dynamics of the system, even for linear ODEs. While it is true that the eigenvalues control the asymptotic behavior of the system, if the matrix representing the system is non-normal, short term transients may appear in the linear system. Recently it has been recognized that since the matrices representing directed networks are non-normal, analysis based on spectra alone may be misleading. Both a normal and a non-normal system may be stable according to the master stability paradigm, but the non-normal system may have an arbitrarily small attraction basin to the synchronous state whereas an equivalent normal system may have a significantly larger sync basin. This points to the need to study synchronization in non-normal networks more closely. In this work, various tools will be utilized to examine synchronization in directed networks, including pseudospectra, an adaption of pseudospectra that we will call Laplacian pseudospectra.  We define a resulting concept that we call Laplacian pseudospectral resilience (LPR). It will be shown that LPR outperforms other scalar measures for estimating the stability of the synchronous state to finite perturbations in a class of networks known as optimal networks. Finally we find that the ideal choice of optimal network, with an eye toward synchronization, is the one which minimizes LPR. 
\end{abstract}

\maketitle

\begin{quotation}
Synchronization, thanks to its prevalence in real world systems, has attracted a significant amount of interest. A number of tools have been developed for analysis of this phenomena, with perhaps the most widely used being the master stability function (MSF). Inspired by its discovery, the idea of optimizing network structure to maximize the synchronizability with respect to the coupling strength was born, and it has been shown that an entire class of directed networks exist which attain the optimal value \cite{nishikawa2006a}. However, MSF analysis is inherently local, which limits the ability to draw conclusions about a systems stability against large perturbations and even sometimes small ones as well \cite{menck2014} for many systems. Interestingly, directed networks generally have non-normal graph Laplacians, which frequently leads to a shrinking of the synchronization basin, further limiting the usefulness of the MSF in this context. Below we discuss non-normality, how it relates to synchronization, and how we can use tools such as pseudospectra in non-normal systems to help us understand the stability of optimal networks against finite sized perturbations. 
\end{quotation}

\section{Introduction}
Many systems evolve guided by the interaction of a large number of constituent parts \cite{newman2018}. Often a property of such systems that is of great interest, is whether or not the system synchronizes \cite{arenas2008,pikovsky2001}. Synchronization is a property which shows up in many contexts, including ecological \cite{blasius2000}, chemical \cite{epstein1996}, neurological \cite{javedan2002}, and mechanical \cite{nair2008}, among many others.

The stability of the synchronous state is frequently of concern. A stable synchronous state is sometimes highly desirable, such as in the electrical power grid \cite{blaabjerg2006,menck2014,nishikawa2015} or undesirable such as in the portion of the brain affected in Parkinson's disease \cite{chen2007, hammond2007}. For the simplest scenario, one in which all oscillators are identical, Pecora and Carroll showed that the linear stability of the synchronous state could be determined \cite{pecora1998}. This analysis was expanded by Nishikawa and Motter \cite{nishikawa2006a,nishikawa2006b} for the case of directed networks, and a new class of networks which they called {\it optimal} networks were studied. 

However this analysis  does not explain some of the behavior observed in directed networks. For instance it has been observed that larger synchronization errors can result than would be expected in such systems \cite{illing2002,jadbabaie2013}.
Recently it has also been suggested that for directed networks, which are often represented by non-normal matrices, linear stability analysis may not be enough, as non-normal matrices can exhibit transients, which may even destroy a linearly stable state \cite{fish2017,asllani2018,zankoc2019,muolo2021} when the perturbations are finite. This point has recently generated some interest and disagreement \cite{nishikawa2021comment, muolo2021comment, sorrentino2022comment}. For states arbitrarily close to the synchronization manifold, so long as the manifold has a finite basin of attraction, the lower bound on the size of the transient can be arbitrarily large for non-normal matrices \cite{trefethen2005}. However any finite system remains stable for infinitesimal perturbations away from the synchronization manifold. Of note in real world systems as verified by experiment, noise plays an important role and so finite perturbations cannot be ignored \cite{blakely2000}. This in a sense makes the linear stability analysis misleading, there can potentially be a large range of coupling strengths for which a non-normal system is linearly stable, and yet for almost every initial condition is repelled away from the synchronous state. Though clearly noise plays an important role, it will not be our focus. 

In this work we will examine the role that non-normality plays in the destabilization of the linearly stable state. We will discuss several scalar quantities related to non-normal matrices and show how they fail to properly characterize which non-normal matrices are most stable to small perturbations. This analysis leads naturally to $\epsilon$-pseudospectra, which we will adapt to the synchronization question by defining what we call $\epsilon$-Laplacian pseudospectra. We also develop a new scalar ``score" for stability, which we call Laplacian pseudospectral resilience. Through numerical evidence, we argue that this scalar outperforms other measures by correctly identifying which non-normal and optimal Laplacian matrices are most stable.
\section{An Example}
To motivate the discussion that follows, we begin with a simple example. For asymptotic stability analysis of a fixed point of a nonlinear system, generally the system is linearized about the fixed point, and declared asymptotically stable if the eigenvalues of the Jacobian matrix are all negative. Such analysis is possible thanks to the Hartman-Grobman theorem \cite{perko2001}. The Jacobian represents a linearization of the dynamics, and is only valid "nearby" to the fixed point. By comparison with a truly linear system, the fixed point should be stable if the eigenvalues are negative. However as we shall see, there is a problem with this logic, it ignores the fact that the fixed point of the system may have a finite basin of attraction. If the Jacobian (i.e. the linearization) is a {\it normal} matrix then the system will monotonically approach the fixed point, so long as it is initially in the attractive region. However for non-normal systems, the attractive region may decrease in size relative to a normal system under the same dynamics, and example of this is shown in Fig. \ref{fig:LinFail}. The difference between normal and non-normal systems an arbitrarily large transient, on an arbitrarily small timescale \cite{trefethen2005} may occur in the latter, meaning an initial condition which starts off in the region of attraction may exit due to the transient.
\begin{figure}[htbp]
\includegraphics[width=0.95\textwidth]{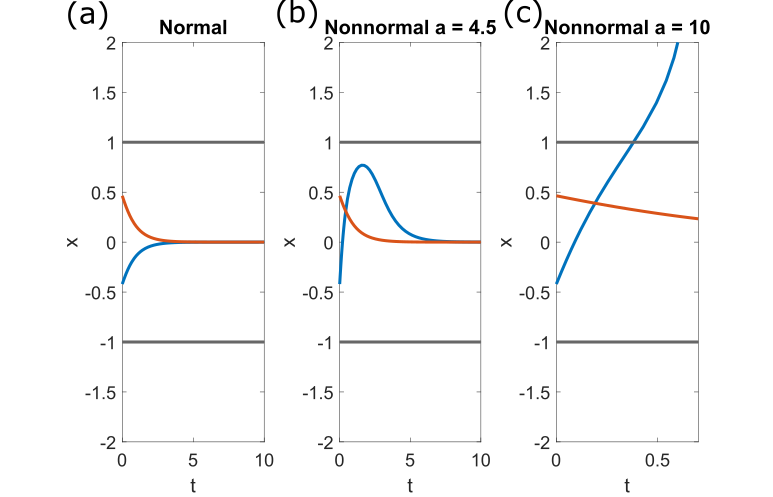}
\caption{How Jacobian linearization can be decieving. In (a),(b),and (c) the same initial condition is used, though in (a) the Jacobian about the fixed point $0$ is a normal matrix, in (b) and (c) the Jacobian is non-normal. \label{fig:LinFail}}
\end{figure}
In Fig. \ref{fig:LinFail} (a)-(c) the system is governed by the following differential equations:
\begin{equation}
    \begin{bmatrix}
    \dot{x_1} \\
    \dot{x_2}
    \end{bmatrix} = 
    \begin{bmatrix}
    x_1^3 -x_1 \\
    x_2^3 - x_2
    \end{bmatrix} + A \begin{bmatrix}
    x_1 \\
    x_2
    \end{bmatrix}. \label{eq:simplecubic}
\end{equation}
For Fig. \ref{fig:LinFail} (a) the matrix
\begin{equation}
    A = \begin{bmatrix}
    -0.1 & 0.05 \\
    0.05 & -0.1
    \end{bmatrix}, \label{eq:Normal1}
\end{equation}
is chosen, 
and for (b) and (c), 
\begin{equation}
    A = \begin{bmatrix}
    -0.1 & a \\
    0 & -0.1
    \end{bmatrix}, \label{eq:Nonnormal1}
\end{equation}
is chosen,
where $a = 4.5$ in (b) and $a = 10$ in (c).

The Jacobian of the system in Eq. \ref{eq:simplecubic} with coupling matrix Eq. \ref{eq:Normal1} about the fixed point $0$ is,
\begin{equation}
    \mathcal{J} = \begin{bmatrix}
    -1.1 & 0.05 \\
    0.05 & -1.1
    \end{bmatrix},
\end{equation}
and has all negative eigenvalues. Similarly for (b) and (c) the Jacobian is,
\begin{equation}
\mathcal{J} = 
\begin{bmatrix}
-0.1 & a \\
    0 & -0.1
\end{bmatrix}.
\end{equation}

In the uncoupled system (i.e. $a = 0$), initial conditions in the open set $(-1,1)^2$ are attracted to the fixed point at $0$. When the coupling matrix is normal, so is the Jacobian of the system and as can be seen in Fig. \ref{fig:LinFail} (a), the initial condition approaches $0$ exponentially fast. However in the scenario where the coupling matrix is non-normal, transient growth is observed. If the transient is not too large the system remains attracted to the fixed point at $0$ Fig \ref{fig:LinFail} (b). In Fig. \ref{fig:LinFail} (c) the transient is large enough that the initial condition is knocked out of the attractive region. 

\section{Non-normality}
Most of the analysis which follows below has appeared elsewhere, such as \cite{trefethen2005} for non-normality and pseudospectra in general, and specifically for networks in \cite{asllani2018,asllani2018top,muolo2019,muolo2021}.
We assume here that all matrices in this work are square unless otherwise noted. 
A matrix $A \in \mathbb{R}^{n \times n}$ is defined to be normal if \cite{horn1985,golub2013},
\begin{equation}
    A^TA = AA^T.
\end{equation}
In this work we will be examining real valued matrices, however the property of normality can be extended to complex valued matrices by substituting the conjugate transpose for the transpose and Hermitian is substituted for symmetric. It is easy to see that a matrix will be normal if it is symmetric, since $A = A^T$ in this scenario, other classes of matrices are normal as well, such as skew-symmetric, orthogonal and unitary. A matrix is called non-normal if it is not normal. Two matrices, $A$ and $B$ are called similar if there exists an invertible matrix P such that \cite{horn1985,golub2013}:
\begin{equation}
    B = P^{-1}AP.
\end{equation}
By the spectral theorem \cite{halmos1963}, any normal matrix is similar to a diagonal matrix $\Lambda$ by an orthogonal matrix $Q$, so,
\begin{equation}
    \Lambda(A) = Q^T A Q. \label{eq:Diag}
\end{equation}
Here $\Lambda(A)$ contains the eigenvalues of $A$ and this means that for normal matrices, the eigenvectors will be orthogonal to one another. For non-normal matrices we may lose orthogonality of the eigenvectors. In fact in general non-normal matrices are not even guaranteed to be similar to a diagonal matrix (though as we will see, some non-normal matrices ARE diagonalizable), so we must replace $\Lambda(A)$, with $J(A)$, which is a matrix in Jordan normal form or Jordan canonical form. So Eq. \ref{eq:Diag} becomes replaced by \cite{horn1985,golub2013},
\begin{equation}
    J(A) = S^{-1}AS.
\end{equation}
Thus non-normal matrices do not in general have orthogonal eigenvectors. This plays a role in the corresponding linear dynamics,
\begin{equation}
    \dot{x} = Ax. \label{eq:SimpLin}
\end{equation}
Notice that we may simply choose to change coordinates to examine Eq. \ref{eq:SimpLin}, in that scenario the system becomes:
\begin{equation}
    \dot{x} = SJ(A)S^{-1} x,
\end{equation}
and choosing $\eta = S^{-1} x$ we have,
\begin{equation}
    \dot{\eta} = J(A) \eta. \label{eq:transformed}
\end{equation}
If $A$ is normal, then $J(A)$ will simply be a diagonal matrix, and the coordinates will simply be changed to another (or perhaps even the same) orthogonal basis. If $A$ is non-normal, the new basis may {\it not} be orthogonal, and $J(A)$ may not be diagonal. The transient behavior of a non-normal linear system is related to both of these facts as will become clear. 

One aspect which plays a role in the size of the transient is the size of the largest Jordan block. As can be seen in Appendix \ref{ap:1} (and was previously shown in \cite{nishikawa2006a}), the solution to each Jordan block Eq. \ref{eq:transformed} will be proportional to $t^{k-1}e^{\lambda t}$, where $k$ is the size of the Jordan block and $\lambda$ is the eigenvalue of the associated block. Clearly then, the larger the Jordan block, the larger the transient will be, as the polynomial term dominates the early behavior of the system. This concept underlies the  concept of optimal networks of Nishikawa and Motter  \cite{nishikawa2010} and others \cite{ravoori2011} using the size of the largest Jordan block \cite{fish2017}, where the {\it sensistivity index} ($\mathbb{S} \in \mathbb{N}$) is defined as the size of the largest Jordan block. This idea was successfully used to determine which {\it unweighted} optimal networks would be most likely to desynchronize, something that the master stability function \cite{pecora1998} is incapable of determining since it only examines conditions infinitesmally close to the synchronization manifold.   

However it is not enough to simply find $\mathbb{S}$, as non-normal matrices which have the same eigenvalues and same Jordan form may have different sized transients as seen in Fig. \ref{fig:TwobyTwo}. In this case we have:
\begin{equation}
    B_\alpha = 
    \begin{pmatrix}
    -1 & \alpha\\
    0 & -1
    \end{pmatrix}. \label{eq:Sample2by2}
\end{equation}
The Jordan form of this matrix stays the same for all $\alpha \neq 0$, and yet the transients get larger as $\alpha \rightarrow \infty$. Similarly, despite being non-normal, and having  $\mathbb{S} = 2$, as $\alpha < \alpha^*$ we can see that no transient at all may be observed. So by changing $\alpha$, we are not changing the Jordan form (unless of course we set $\alpha =0$), but rather we are changing the basis $S$.
\begin{figure}[htbp]
\includegraphics[width=0.5\textwidth]{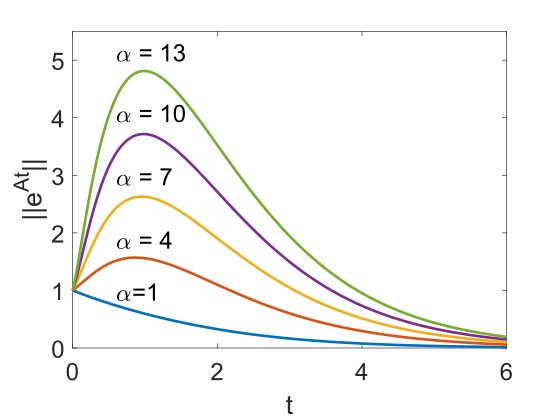}
\caption{Linear dynamics of the matrix in Eq. \ref{eq:Sample2by2} for $\alpha \in \{1,4,7,10,13 \}$. As $\alpha \rightarrow \infty$ the size of the transient grows, despite having the same Jordan form. Note that for small enough $\alpha$, despite the matrix being non-normal, no transient growth is observed. \label{fig:TwobyTwo}}
\end{figure}
So in this example we must pair information about the largest Jordan block with some additional information to understand the size of the transient. This example seems to suggest that we should look to the term $\alpha$, and Henrici's concept of departure from normality \cite{henrici1962} (or Henrici's index) which does just that. To calculate Henrici's index ($H_{\mbox{ind}}$) one must first perform the Schur decomposition of the real matrix $A$,
\begin{equation}
    A = QTQ^{T}, \label{eq:Schur}
\end{equation}
where $Q$ is an orthogonal matrix, and $T$ is a block upper triangular matrix \cite{horn1985} (as opposed to the unitary case in which then $T$ is always triangular). Then,
\begin{equation}
    H_{\mbox{ind}} = ||T - diag(T)||_F,
\end{equation}
with $diag(\cdot)$ being an $(n \times n)$ matrix containing only the diagonal elements of the matrix and $||\cdot||_F$ is the Frobenius norm. For the matrices in Eq. \ref{eq:Sample2by2}, $H_{\mbox{ind}}= \alpha$. So in our fist example it may seem like $H_{\mbox{ind}}$ paired with the Jordan form may be enough for us characterize the transient behavior of the system. However as can be seen in Fig. \ref{fig:TwobyTwo_Ex2} this is not the case. Below are three matrices,
\begin{equation}
    A_1 = 
    \begin{pmatrix}
    -2 & 5 \\
    0 & -2
    \end{pmatrix},
    A_2 = 
    \begin{pmatrix}
    -1 & 5 \\
    0 & -3
    \end{pmatrix},
     A_3 = \frac{\max \Lambda(\mbox{Sym}(A_1))}{\max \Lambda(\mbox{Sym}(A_2))}A_2,
    \label{eq:Sample2by2_Ex2}
\end{equation}
where,
\begin{equation}
    \Lambda(B) = \{\lambda | \lambda \ \mbox{is an eigenvalue of} \ B\},
\end{equation}
is the set of all eigenvalues of $B$ and
\begin{equation}
    \mbox{Sym}(A) = \frac{A+A^T}{2},
\end{equation}
is the symmetric part of A.
We choose $H_{\mbox{ind}}(A_1) = H_{\mbox{ind}}(A_2) = 3.9$ here.  Also note that $tr(A_1) = tr(A_2)$, where $tr(\cdot)$ denotes the trace. So by this measure of non-normality these two matrices are indistinguishable, and yet $A_2$ clearly has a transient while $A_1$ does not. 

The idea of normalizing $H_{\mbox{ind}}$ as was done in \cite{muolo2021} using $\tilde{H}_{\mbox{ind}} = \frac{H_{\mbox{ind}}}{||A||_F} $ does not resolve the situation either, because though the matrices from Eq. \ref{eq:Sample2by2} do have $\tilde{H}_{\mbox{ind}}$ increasing toward $1$ as $\alpha \rightarrow \infty$, $A_2$ in Eq. \ref{eq:Sample2by2_Ex2} actually has a smaller $\tilde{H}_{\mbox{ind}}$ than does $A_1$, with $\tilde{H}_{\mbox{ind}}(A_1) = 0.8704,\tilde{H}_{\mbox{ind}}(A_2) = \tilde{H}_{\mbox{ind}}(A_3) = 0.8452$, when rounded to $4$ significant digits. 
\begin{figure}[htbp]
\includegraphics[width=0.5\textwidth]{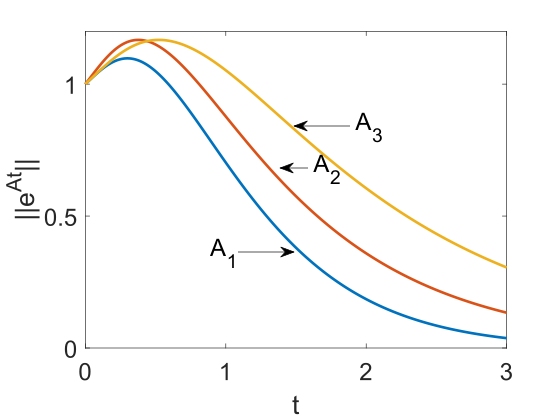}
\caption{In this case $5=H_{\mbox{ind}}(A_1) = H_{\mbox{ind}}(A_2) > H_{\mbox{ind}}(A_3)$, and $\tilde{H}_{\mbox{ind}}(A_1) = 0.8704$ (to $4$ significant digits), $\tilde{H}_{\mbox{ind}}(A_2) = \tilde{H}_{\mbox{ind}}(A_3) = 0.8452$. Finally $\omega(A_1)=\omega(A_3)=0.5$, and $\omega(A_2) = 0.6926$ (again to $4$ significant digits). All three matrices also have the same Jordan form and so the sensitivity index $\mathbb{S}$ is the same. \label{fig:TwobyTwo_Ex2}}
\end{figure}
An additional scalar which may directly get at the size of the transient for linear dynamics, is known as the numerical abscissa ($\omega(A)$), which we will now define:
\begin{equation}
    \omega(A) = \max(\Lambda(\mbox{Sym}(A))).
\end{equation}
$\omega(A)$ controls the behavior of $||e^{At}||, \ \mbox{as} \ t \rightarrow 0$ \cite{trefethen1997}. Thus $\omega(A)$ divulges what the initial slope will be. However  as can be seen in Fig. \ref{fig:TwobyTwo_Ex2}, though this reveals the initial slope, it gives no information about the size of the transient, as the dynamics generated from $A_2$ and $A_3$ have transients of the same size, but the one from $A_2$ has the larger initial slope.

So now it has been established with simple $(2 \times 2)$ examples that the transient dynamics of a non-normal system are related to the Jordan normal form, as well as the numerical abscissa and Henrici's index. These together are not enough to tell the complete story of the transient. So now we will focus on another tool, the pseudospectra.
\section{Pseudospectra \label{sec:Psuedospectra}}
Pseudospectra is a generalization designed for spectral analysis of non-normal matrices (though the pseudospectra is well defined for normal matrices). As noted in \cite{trefethen2005} pseudospectra have been known by many names, for instance they are called approximate eigenvalues in \cite{boccara1990}. Throughout this work we will use the name pseudospectra. The $\epsilon$-pseudospectra has several equivalent definitions \cite{trefethen2005}, however we will use the following:
\begin{equation}
    \Theta_{\epsilon}(A) = \{z \in \mathbb{C} \ | \ \sigma_n(zI-A) < \epsilon   \},
\end{equation}
where $\sigma_n(B)$ is the smallest singular value of $B$ and $I$ is the identity matrix. By examining the resulting level sets in the complex plane, we may then regain the ability to determine the stability of a non-normal
matrix. 

For normal matrices the stability is determined by the largest eigenvalue, if that eigenvalue is negative then for linear dynamics the system is stable, and the dynamics from any initial condition will eventually settle into a fixed point. This is also true for non-normal matrices, however the size and duration of the transient of a non-normal matrix is related to the smallest $\epsilon$ for which $\Theta_{\epsilon}$ crosses into the right side of the complex plane. The relevant fact to our interest here is that a lower bound on the transient is given by $\frac{\beta_\epsilon}{\epsilon}$, and importantly this quantity can be arbitrarily large \cite{trefethen2005}, where,
\begin{equation}
    \beta_\epsilon = \sup \{\mbox{Re}(z)| z \in \Theta_\epsilon(A)\},
\end{equation}
is called the $\epsilon$-psuedospectral abscissa.

\section{Optimal Networks and Synchronization}
We begin our discussion of optimal networks by reviewing the definition of  the master stability function (MSF) \cite{pecora1998}. In this seminal work, Pecora and Carroll showed that a model of the form:
\begin{equation}
    \dot{\bm{x}_i} = f(\bm{x}_i) + \sum \limits_j [\mathcal{A}]_{ij}
    h(\bm{x}_i,\bm{x}_j), \label{eq:DynamicalModel}
\end{equation}
where $\mathcal{A}$ is the adjacency matrix, can be linearized around the synchronous state, and amazingly that one only needs to examine the exponential stability of individual blocks of the form
\begin{equation}
    \dot{\eta} = [\mathcal{J}(f) + k\lambda_i \mbox{DH}]\eta, \label{eq:MSFBlock}
\end{equation}
where $\mathcal{J}(\cdot)$ is the (possibly) time dependent Jacobian, $\lambda_i$ is an eigenvalue of the graph Laplacian ($L$) and DH is the derivative of the coupling function \cite{pecora1998}. The graph Laplacian is given by:
\begin{equation}
    L = D-\mathcal{A},
\end{equation}
such that,
\begin{equation}
    [D]_{ij} = 
    \begin{cases}
    0 \ \mbox{if} \ i\neq j \\
    \sum_j [\mathcal{A}]_{ij} \ \mbox{if} \ i = j
    \end{cases}.
\end{equation}
Since we  are interested in the eigenvalues of $L$ it will be useful to note here that they may be sorted as follows:
\begin{equation}
    0= \lambda_1 \leq \mbox{Re}(\lambda_2) \leq ... \leq \mbox{Re}(\lambda_n). 
\end{equation}
In the context of Eq. \ref{eq:MSFBlock}, $\lambda_1$ represents the synchronization manifold, and $\lambda_i \ (\forall i \neq 1)$ represent directions transverse to that manifold. Pecora and Carroll found that now the exponential stability of the synchronization manifold could be examined by use of the master stability equation:
\begin{equation}
    \dot{\xi} = [\mathcal{J}(f) + (a+bi)DH] \xi. \label{eq:MSE}
\end{equation}

\begin{figure}[htbp]
\includegraphics[width=0.5\textwidth]{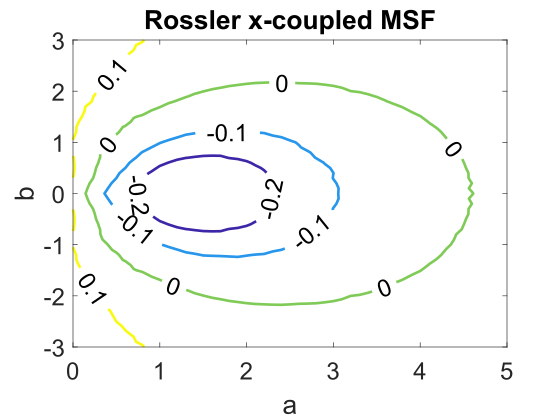}
\caption{The level sets of the MSF for an x-coupled R{\"o}ssler system. The system has an exponentially stable synchronization manifold if all of the nonzero eigenvalues of the graph Laplacian fall within the $0$ contour.  \label{fig:MSF_Contour}}
\end{figure}

The MSF is then defined by determining the maximum Lyapunov exponents of Eq. \ref{eq:MSE} for each $a,b$ pair. If the MLE is negative, then the state is exponentially stable, and so if the eigenvalues of the graph Laplacian all lie inside the $0$ level set (shown in Fig. \ref{fig:MSF_Contour}) of the MSF, then the synchronous state is declared exponentially stable. 

We now review the concept of optimal networks as they were called by Nishikawa and Motter \cite{nishikawa2006a,nishikawa2006b,nishikawa2010}. In their work they defined optimal networks to be networks which minimized the normalized spread of the eigenvalues. That is defining the normalized spread ($\rho$) to be,
\begin{equation}
    \rho^2 = \frac{1}{d^2(n-1)} \sum \limits_{i=2}^n |\lambda_j - \bar{\lambda}|, 
\end{equation}
where $d = \frac{1}{n}\sum \limits_{i} \sum \limits_{j \neq i} \mathcal{A}_{ij}$ is the average degree of the network and $\bar{\lambda} = \frac{1}{n-1}\sum \limits_{i=2}^n \lambda_i$ is the average of the nonzero eigenvalues of the graph Laplacian. Note that $\rho^2 = 0$ only if,
\begin{equation}
   \lambda_2 = \lambda_3 ... =\lambda_n \label{eq:optimalnetcondition} 
\end{equation}
and so a network is termed optimal if the condition set in Eq. \ref{eq:optimalnetcondition} is satisfied. Such networks are optimal in the sense that they exhibit the maximal range of coupling strengths under which a network can have an exponentially stable synchronous state, as determined by the MSF. All of the networks examined below are optimal.

\section{Laplacian Pseudospectra}
Of note, before beginning discussion on Laplacian pseudospectra, is that for the computation of the psuedospectra, a scaled identity matrix $zI$ is added to the matrix $-A$, and every eigenvalue of $-A$ will be changed by the amount $z$. To see this note that any square matrix may be written in Jordan form \cite{golub2013}, with $A = SJ(A)S^{-1}$ and thus,
\begin{equation}
    A+\kappa I = SJ(A)S^{-1} + \kappa I, \kappa \in \mathbb{C}. \label{eq:JordanSum1}
\end{equation}
Eq. \ref{eq:JordanSum1} implies that:
\begin{equation}
    S^{-1}(A+\kappa I) = J(A)S^{-1}+\kappa S^{-1} \implies S^{-1}(A+\kappa I)S = J(A)+ \kappa I. \label{eq:JordanSum2}
\end{equation}
Finally from Eq. \ref{eq:JordanSum2} it may be concluded that,
\begin{equation}
    A+\kappa I = S^{-1}(J(A)+\kappa I)S. \label{eq:JordanSum3}
\end{equation}
So from Eq. \ref{eq:JordanSum3}, and the fact that $J(A)$ is an upper triangular matrix, it is clear that all of the diagonal entries of $J(A)$, and thereby the eigenvalues as well, have been adjusted by the amount $\kappa$.
We are now ready to define the $\epsilon$-Laplacian pseudospectra {\it of a graph Laplacian} as:
\begin{equation}
    \Phi_{\epsilon}(L) = \{z \in \mathbb{C} \ | \ \sigma_{n-1}(E-L)< \epsilon, E = z\mathbb{I}_{n-1}\},
\end{equation}
where $E$ is an $(n \times n)$ matrix which perturbs all of the eigenvalues of $L$ by a constant amount $z \in \mathbb{C}$ except for one of the zero eigenvalues, $\mathbb{I}_{n-1}$ is the identity matrix with the the first entry set to $0$, $\mathbb{I}_{1,1} = 0$, $\sigma_{n-1}(\cdot)$ is the second smallest singular value of the matrix, and the input matrix $L$ is assumed to be a graph Laplacian. 

For numerical computation of $\Phi_{\epsilon}$,the Schur decomposition will be used, which was already introduced in Eq. \ref{eq:Schur}. The graph Laplacian $L$ may be written $L = UT_1U^*$, where $U$ is a unitary matrix.
A sub-matrix $T_2$ of $T_1$ is then created by keeping all of the rows and columns, except for one containing a zero eigenvalue. Finally we set $\Phi_{\epsilon}=\Theta_{\epsilon}(T_2)$. For reproducibility we make code available at \cite{fish2022soft}.

In Fig. \ref{fig:LapPseudo} an example of the $\epsilon$-Laplacian pseudospectra is shown for $3$ different networks (a $6$ node example of the networks is given in Fig. \ref{fig:ToyNetworkBasinStability}(d)), with the parameters $\gamma = 0$ in Fig. \ref{fig:LapPseudo}(a), $\gamma=0.5$ in Fig. \ref{fig:LapPseudo}(b) and $\gamma=1$ in Fig. \ref{fig:LapPseudo}(c). It is clear from this figure that not all optimal networks are created equal, some are more likely to desynchronize due to a transient than others.
\begin{figure}[htbp]
\includegraphics[width=0.95\textwidth]{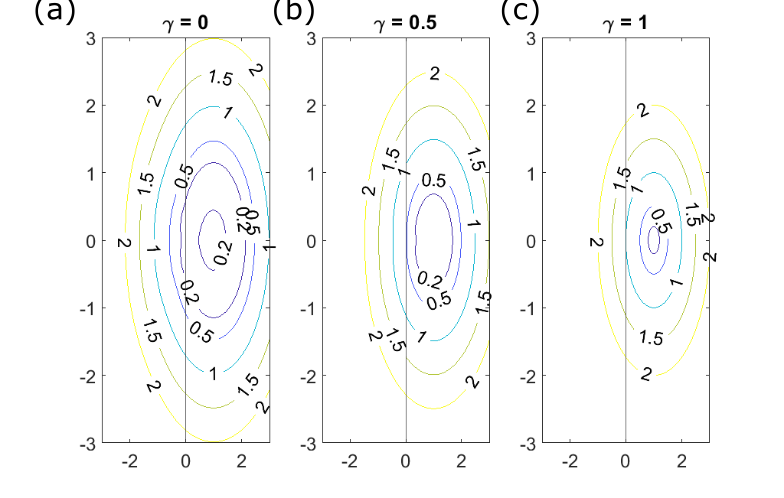}
\caption{Here the Laplacian pseudospectra of the toy network described in Sec. \ref{sec:LPR} is shown for three values of $\gamma$, (a) $\gamma = 0$, (b) $\gamma = 0.5$ and (c)$\gamma=1$ . Each of these networks is optimal, but the network with $\gamma = 1$ needs a much larger perturbation to cross into the left (unstable) part of the plane, than the others.    \label{fig:LapPseudo}}
\end{figure}

\section{Laplacian Pseudospectral Resilience \label{sec:LPR}}
We now define the {\it Laplacian pseudospectral resilience} (LPR) of an optimal network. For $L$ note that $\lambda_2$ is known as the algebraic connectivity of the graph \cite{fiedler1967}. $\mbox{Re}(\lambda_2)$ controls the asymptotic rate of convergence to the synchronous state of any network, and thus in general it is desirable to maximize $\mbox{Re}(\lambda_2)$. For non-normal networks $\lambda_2$ also plays a role in the size of the transient. As noted in Section \ref{sec:Psuedospectra}, the lower bound for the size of the transient is related to $\beta_\epsilon$, and so an analogous quantity is defined for the $\epsilon$-Laplacian psuedospectra,
\begin{equation}
    \zeta_\epsilon(L) = \inf \{\mbox{Re}(z)|z \in \Phi_\epsilon\},
\end{equation}
which we call the $\epsilon$-Laplacian pseudospectral abscissa. In this case $\inf$ is chosen rather than $\sup$ since the stability of the synchronous state is related to $-L$. For normal graph Laplacians, $\zeta_{\lambda_2} = 0$, yet in the non-normal case $\zeta_{\lambda_2}$ may cross into the left plane. This leads to the introduction of LPR, with,
\begin{equation}
    \mbox{LPR} = \max (-\zeta_{\lambda_2}(L),-\zeta_{\lambda_2}(L^T)) \in [0, \infty). 
    \label{eq:LPR}
\end{equation}
Note that in non-normal graph Laplacian the left and right eigenvectors may not coincide, hence the choice of the maximum between $-\zeta_{\lambda_2}(L)$ and $-\zeta_{\lambda_2}(L^T)$ for LPR in Eq. \ref{eq:LPR}.

LPR grows along with the lower bound on the transient, so LPR should be directly proportional to the stability of any non-normal graph Laplacian in systems where $\lambda_2$ represents the synchronizability of the system (the synchronizability of some systems may be much more complicated \cite{huang2009generic} and thus study of such scenarios is reserved for future work). As will be shown in the next paragraph, numerical evidence suggests that this is indeed true. We note that LPR may not be as useful for comparison between networks which are not optimal, or between networks which have different $\lambda_2$. Additionally we assume that the perturbations away from the synchronization manifold is finite but not too large. Such scenarios may require examination of the full Laplacian pseudospectra which is reserved for future work.

To examine the effects of non-normality on synchronization a toy network model was chosen as seen in Fig. \ref{fig:ToyNetworkBasinStability} (d) for a $6$ node example. This network can interpolate between two optimal networks, when $\gamma = 1$, the network is a directed star and when $\gamma = 0$ it is the directed chain. For any value of $\gamma$ the graph Laplacian has the eigenvalues $\lambda_j = 1 \ (\forall j >1)$, meaning the network remains optimal regardless of the value of $\gamma$. We will restrict analysis to $\gamma \in [0,1]$ for this work. For $\gamma \in [0,1]$, $\tilde{H}_{\mbox{ind}}$ changes with $\gamma$ and reaches a minimum value when $\gamma =0.5$. When $\gamma = 1$, the graph Laplacian is diagonalizable, but for $\gamma < 1$ there is a Jordan block of size $n-1$ for the eigenvalue $1$.
\begin{figure}[htbp!]
\includegraphics[width=0.95\textwidth]{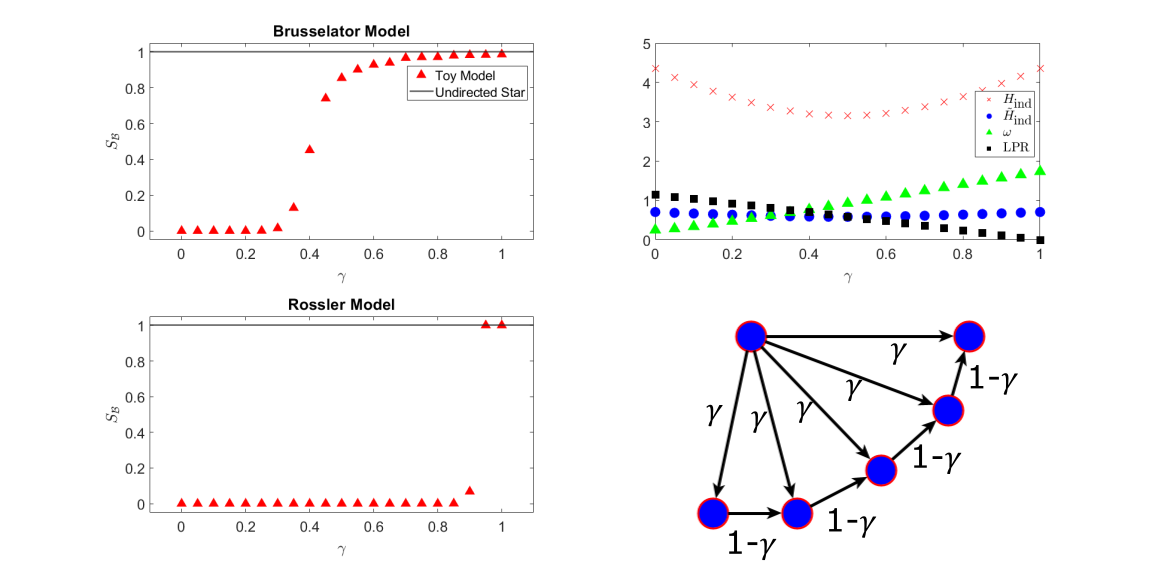}
\caption{(a) We show the basin stability versus the parameter $\gamma$ from our toy model. While $\tilde{H}_{\mbox{ind}}$ predicts that the network should be most stable at $\gamma = 0.5$, we can see that in fact (as predicted by our LPR) the stability monotonically decreases with as $\gamma \rightarrow 0$. Thus the choice of which optimal network matters, for instance choosing the optimal network with $\gamma = 1$ (that is the directed star network) gives nearly the same level of stability as the undirected star network, whereas choosing the network with $\gamma = 0$ produces a high level of instability which would not be predicted from master stability analysis.For simulation purposes, $e = 2.5, g = 1.12, k_x = 0.7, k_y = 5, n =20$ and to obtain an estimate of $\mathcal{R}$ we averaged over $5000$ initial conditions, integrating for $4000$ time units with $\delta = 0.1$. (b) The same as (a) except using the R{\"o}ssler x-coupled system with parameters $p = q = 0.2$, $r = 7$, $k_x = 4.45$ from Eq. \ref{eq:RosslerDynamics}. (c) $H_{\mbox{ind}}, \tilde{H}_{\mbox{ind}}, \omega,$ and LPR are shown for the various values of $\gamma$. For all of these scalars the lower the value the higher the predicted stability, note that both $H_{\mbox{ind}}$ and $\tilde{H}_{ind}$ predict that the most stable network is with $\gamma = 0.5$, and $\omega$ predicts that $\gamma = 0$ should be the most stable, only LPR correctly orders all of the networks by stability. (d) A 6 node example of the toy network we used for the simulations.  \label{fig:ToyNetworkBasinStability}}
\end{figure}
For comparison of the effectiveness of the various scalars for predicting the stability of the system, two models, the R{\"o}ssler x-coupled system following \cite{fish2017} and the Brusselator model following \cite{muolo2021}, are used. In both cases diffusive coupling will be assumed, that is $h(x,y) = y-x$ from Eq. \ref{eq:DynamicalModel}. For the Brusselator model \cite{prigogine1968} the equations describing the dynamics are:
\begin{equation}
    \begin{bmatrix}
    \dot{x}_i \\
    \dot{y}_i
    \end{bmatrix} =
    \begin{bmatrix}
    1+(e+1)x_i+gx_i^2y_i-k_x \sum \limits_{j=1}^n L_{ij}x_j \\
    ex_i - gx_i^2y_i -k_y \sum \limits_{j = 1}^n L_{ij}y_j
    \end{bmatrix} \ (\forall i \in \{1,2,...,n\}).\label{eq:BrussDynamics}
\end{equation}
The equations for the R{\"o}ssler system \cite{rossler1976} are as follows,
\begin{equation}
    \begin{bmatrix}
    \dot{x}_i \\
    \dot{y}_i \\
    \dot{z}_i
    \end{bmatrix} = 
    \begin{bmatrix}
    y_i - z_i - k_x \sum \limits_{j=1}^n L_{ij}x_j\\
    x_i + py_i \\
    q + z_i(x_i - r)
    \end{bmatrix} \ (\forall i \in \{1,2,...,n\}).\label{eq:RosslerDynamics}
\end{equation}
For the purposes of simulation, initially the system is assumed to be synchronized and to lie on the attractor, so each state is started with a random initial condition and a single node,
\begin{equation}
   \bm{x}^{0} \in \mathcal{U}([0,1]^{\mathcal{D}})
\end{equation}
 where $\mathcal{D}$ is $2$ for Eq. \ref{eq:BrussDynamics} and $3$ for Eq. \ref{eq:RosslerDynamics}. Then either Eq. \ref{eq:BrussDynamics} or Eq. \ref{eq:RosslerDynamics} are integrated for $4000$ time units in order to allow the system to settle onto the attractor. $\bm{x}^{4000}$ is assumed to be the synchronous state for each node, and so to obtain a perturbed initial condition we add a perturbation of size $\delta$, and the initial condition is given by,
 \begin{equation}
 \bm{X}^{0} = 
     \begin{bmatrix}
     \bm{x}^{4000}\\
     \bm{x}^{4000} \\
     \vdots \\
     \bm{x}^{4000}
     \end{bmatrix} +
     \delta \Delta \in \mathbb{R}^{n\mathcal{D}}, \label{eq:InitialConditions}
 \end{equation}
 where,
 \begin{equation}
     \Delta = \frac{v}{||v||},
 \end{equation}
 and
 \begin{equation}
     v \in \mathcal{N}(0,1)^{n\mathcal{D}}.
 \end{equation}

To estimate the resilience to perturbation a quantity known as basin stability \cite{menck2013,schultz2017} ($S_\mathcal{B}$) 
is defined as,
\begin{equation}
    S_\mathcal{B} = \int \limits_R \bm{1}_{\mathcal{B}} \ d\mu \in [0,1].
\end{equation}
Here $\mathcal{B}$ is the basin of attraction, $\mu$ is the invariant measure, $R$ is a set containing the basin (that is $\mathcal{B} \subset R$)  and $\bm{1}$ is the indicator function.   

In Fig. \ref{fig:ToyNetworkBasinStability}(a) simulations are performed for the Brusselator system, with the parameters $e = 2.5, g = 1.12, k_x = 0.7, k_y = 5, \delta = 0.1, n = 20$. To obtain an estimate of $S_{\mathcal{B}}$, $5000$ initial conditions are sampled as in Eq. \ref{eq:InitialConditions}, and the system is declared to be synchronous if after integration of $4000$ time units and
\begin{equation}
    \mbox{sd}_x \leq 0.1,
\end{equation}
where $\mbox{sd}_x$ is the standard deviation of the x-component.
It can be seen that as $\gamma \rightarrow 0$, fewer and fewer initial conditions synchronize, that is $S_\mathcal{B} \rightarrow 0$. When $\gamma = 1$ approximately the same number of initial conditions reach a synchronous state as the undirected star (which is normal). This is despite the fact that the minimum value of $\tilde{H}_{\mbox{ind}}$ occurs at $\gamma = 0.5$. Indeed all scalar measures which have been previously introduced, fail to correctly order the optimal networks by $S_{\mathcal{B}}$ except for LPR. We note that,
\begin{equation}
\mathbb{S} = 
    \begin{cases}
    1 \ \mbox{if} \ \gamma = 1 \\
    n-1 \ \mbox{otherwise}
    \end{cases},
\end{equation}
and $\omega(-L)$ (all listed scalar measures are performed on $-L$, with the exception of LPR which was defined assuming $-L$ related to stability, since $L$ is positive semidefinite) has a minimum at $\gamma = 0$, predicting the least stable network of the group as the most stable. LPR approaches its minimum value (i.e. most stable) as $\gamma \rightarrow 1$, thereby correctly ordering the optimal networks by their basin stability. This is true in the R{\"o}ssler x-coupled case as well, though with the parameters  $p = q = 0.2$, $r = 7$, $k_x = 4.45$ from Eq. \ref{eq:RosslerDynamics}, and the integration time set to $6000$ time units instead of $4000$. Interestingly LPR allows for accurate predictions in this scenario, even though the synchronizability of the system is related to both $\lambda_2$ and $\lambda_n$ rather than just $\lambda_2$. However we are only examining {\it optimal} networks here, which have $\lambda_2 = \lambda_n$, which allows LPR to be useful in this case. To reiterate noted above, in the more general setting the full Laplacian pseudospectra will need to be examined.
\begin{figure}[htbp]
\includegraphics[width=\textwidth]{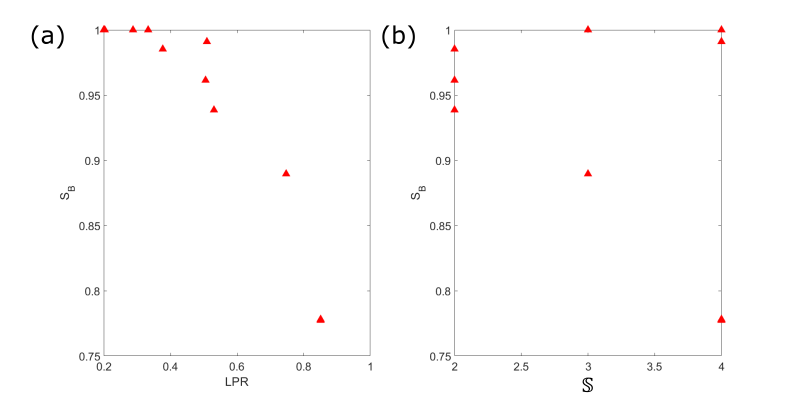}
\caption{Basin stability in optimal networks of different density. Displayed here are 11 randomly chosen 5 node optimal networks, selected to have $\mathbb{S} \in \{2,3,4\}$ but with differing densities of edges. Each of these graph Laplacians were scaled to have the same value of $\lambda_2$. In (a) it can be seen that LPR sorts these networks in terms of basin stability, those with the largest basins having smallest LPR as expected. In (b) it can be seen that the largest Jordan block is not useful in this context, in fact the networks with {\it both} the largest and the smallest $S_{\mathcal{B}}$ have the same value of $\mathbb{S}$ in this case. \label{fig:FinalExample}}
\end{figure}

Finally in Fig. \ref{fig:FinalExample} a comparison among optimal networks with different densities. In this situation the values of $\lambda_2$ may be different, thus a scalar $\mu = \frac{1}{\lambda_2}$ is introduced so that $\lambda_2^{(\mu)} = 1$, where $\lambda_2^{(\mu)}$ is an eigenvalue of $\mu L$. This allows for consideration of a scenario which was not explored in \cite{fish2017}, which compared only optimal networks of the same density. The 11 different networks chosen at random used in Fig. \ref{fig:FinalExample} are all 5 node optimal networks (the complete set of which were summarized in \cite{fish2017}), with varying density with values of $\mathbb{S}$ between 2 and 4. In this case the measure $\mathbb{S}$ completely fails to appropriately order the networks, as can be seen in Fig. \ref{fig:FinalExample} (b). Even though all of the examined networks are scaled to have the same value of $\lambda_2$, the size of the largest Jordan block is not a helpful comparison for the size of the transient. However LPR again gets the ordering correct, even in this scenario as seen in Fig. \ref{fig:FinalExample} (a). Fittingly these results seem to agree with a statement once made by Charles Van Loan in a technical report \cite{van1975study} “...one of the most basic tenets of numerical algebra, namely, anything that the Jordan decomposition can do, the Schur
decomposition can do better!”

\section{Conclusions}
In this work we have examined the stability of the synchronous state of "optimal" non-normal matrices, specifically of non-normal graph Laplacians. For such networks, MSF analysis is no longer useful in some circumstances, since the synchronization basin may shrink to an arbitrarily small size and we must appeal to other measures to characterize the stability to small but finite perturbations. We have shown that various scalar measures fail to properly characterize the behavior of the system in this scenario. We have introduced a new concept of Laplacian pseudospectra and a new measure which we have named Laplacian pseudospectral resilience (LPR). Finally we have provided numerical evidence that LPR outperforms other existing scalar quantities in ordering optimal networks by their basin stability when perturbations are small. 

\begin{acknowledgments}
We would like to thank Jie Sun for the many productive conversations we had on the subject of non-normal matrices and optimal networks.  E.M.B. has received funding from the Army Research Office (ARO) (No. N68164-EG) and the Office of Naval Research (ONR)  and the NIH-CRCNS and E.M.B. and J.F. were supported by the Defense Advanced Research Projects Agency (DARPA). 
\end{acknowledgments}

\section*{Data Availability Statement}
Data available on request from the authors

\appendix
\section{Jordan form as it relates to transient growth}
\label{ap:1}
A matrix $B \in \mathbb{C}^{n \times n}$ is called {\it diagonalizable} if there exists an invertible matrix $S$ such that \cite{golub2013}:
\begin{equation}
    B = S\Lambda S^{-1}, \label{eq:Diagonalizable}
\end{equation}
where 
\begin{equation}
\Lambda =
\begin{bmatrix}
\lambda_1 & 0 & ... & 0\\
0 & \lambda_2 & ... & 0 \\
\vdots & \vdots & \vdots & \vdots \\
0 & ... & ... & \lambda_n
\end{bmatrix}
\end{equation}
and $\lambda_i \in \mathbb{C} \ (\forall i \in \{1,2,...,n\})$ are the eigenvalues of $B$.
\begin{theorem}
Let $B$ be diagonlizable and $\boldsymbol{\eta} = S^{-1}\boldsymbol{x}$. Then the solution to 
\begin{equation}
    \dot{\boldsymbol{\eta}} = \Lambda \boldsymbol{\eta} \label{eq:Transformed}
\end{equation}
is
\begin{equation}
    \boldsymbol{\eta} = 
    \begin{bmatrix}
    c_1 e^{\lambda_1 t} \\
    \vdots \\
    c_n e^{\lambda_n t}
    \end{bmatrix}
    \label{eq:Solution1}
\end{equation}
\end{theorem}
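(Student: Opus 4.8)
The plan is to exploit the fact that $\Lambda$ is diagonal, which is precisely the structural feature that decouples the transformed system. First I would write Eq.~\eqref{eq:Transformed} out component by component. Because $\Lambda$ has entries $[\Lambda]_{ij} = \lambda_i \delta_{ij}$, the $i$-th row of the matrix product $\Lambda\boldsymbol{\eta}$ contributes only the single term $\lambda_i \eta_i$, so the coupled vector equation collapses into $n$ mutually independent scalar ordinary differential equations
\begin{equation}
    \dot{\eta}_i = \lambda_i \eta_i, \qquad i \in \{1,2,\dots,n\}. \label{eq:scalarode}
\end{equation}
This is the key step: the change of coordinates $\boldsymbol{\eta} = S^{-1}\boldsymbol{x}$ guaranteed by diagonalizability (Eq.~\eqref{eq:Diagonalizable}) is exactly what removes all cross-terms, so no residual coupling survives.

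Next I would solve each scalar equation in \eqref{eq:scalarode} individually. Each is a first-order linear constant-coefficient ODE, and separation of variables (or simply differentiating the candidate and checking) yields $\eta_i(t) = c_i e^{\lambda_i t}$, where the integration constant is fixed by the initial data as $c_i = \eta_i(0)$. Stacking these $n$ scalar solutions into a vector reproduces exactly the claimed form in Eq.~\eqref{eq:Solution1}. The $\lambda_i$ here are the diagonal entries of $\Lambda$, which by hypothesis are the eigenvalues of $B$, so the exponents are precisely the spectrum of the original matrix.

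The main point requiring a word of care—rather than a genuine obstacle—is \emph{uniqueness}: I must argue that $c_i e^{\lambda_i t}$ is \emph{the} solution of the initial value problem, not merely \emph{a} solution. This follows from the standard existence-and-uniqueness theorem for linear systems with constant (hence globally Lipschitz) coefficients, which applies componentwise to \eqref{eq:scalarode}; consequently the solution of \eqref{eq:Transformed} is determined entirely by $\boldsymbol{\eta}(0)$ and coincides with \eqref{eq:Solution1}. I would close by remarking that transforming back via $\boldsymbol{x} = S\boldsymbol{\eta}$ expresses the original dynamics as a linear combination of the modes $e^{\lambda_i t}$ weighted by the columns of $S$, which is the fact invoked in the main text when relating the basis $S$ to the size of the transient; the diagonalizable case treated here is the baseline against which the Jordan-block contributions $t^{k-1}e^{\lambda t}$ are measured.
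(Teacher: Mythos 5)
Your proposal is correct and follows essentially the same route as the paper's proof: decouple the system using the diagonal structure of $\Lambda$, solve each scalar equation $\dot{\eta}_i = \lambda_i \eta_i$ separately, and stack the solutions. Your added remark on uniqueness via the standard existence-and-uniqueness theorem is a small strengthening the paper omits, but it does not change the argument.
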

\begin{proof}
We begin with 
\begin{equation}
    \dot{\boldsymbol{x}} = B\boldsymbol{x}. \label{eq:LinearSystem}
\end{equation}
Since $B$ is diagonalizable we may substitute Eq. \ref{eq:Diagonalizable} into Eq. \ref{eq:LinearSystem} and we have:
\begin{equation}
    \dot{\boldsymbol{x}} = S \Lambda S^{-1}\boldsymbol{x}. \label{eq:Untransformed}
\end{equation}
Using a transformation of variables we find that Eq. \ref{eq:Untransformed} may be rewritten as Eq. \ref{eq:Transformed}. Since $\Lambda$ is a diagonal matrix we have:
\begin{equation}
\begin{bmatrix}
\dot{\eta_1} \\
\dot{\eta_2} \\
\vdots \\
\dot{\eta_n}
\end{bmatrix} =
    \begin{bmatrix}
    \eta_1 \lambda_1 \\
    \eta_2 \lambda_2 \\
    \vdots \\
    \eta_n \lambda_n
    \end{bmatrix}.
    \label{eq:EtaEquation1}
\end{equation}
We note that there are no "cross" terms in Eq. \ref{eq:EtaEquation1} each equation may then be solved independently. It is well known that the solution of 
\begin{equation}
    \dot{\eta_i} = \lambda_i \eta_i
\end{equation}
is
\begin{equation}
    \eta_i = e^{\lambda_i t}e^{C_i} = c_i e^{\lambda_i t}
\end{equation}
\cite{zill2012}, which allows us to recover Eq. \ref{eq:Solution1} completing the proof.
\end{proof}
However not all square matrices are diagonalizable. The following has also been derived in \cite{nishikawa2006b}.
The {\it Jordan Form} or {\it Jordan Canonical Form} of a matrix is given by \cite{golub2013}:
\begin{equation}
J = \begin{bmatrix}
J_1 & 0 & ... & 0 \\
0 & J_2 & ... & 0 \\
\vdots & \vdots & \vdots & \vdots \\
0 & ... & 0 & J_m
\end{bmatrix} \in \mathbb{C}^{n\times n},
\end{equation}
where
\begin{equation}
    J_i = 
    \begin{bmatrix}
    \lambda_i & 1 & ... & 0 & 0 \\
0 & \lambda_i & 1 & ... & 0 \\
0 & 0 & \lambda_i & 1 & 0 \\
\vdots & \vdots & \vdots & \vdots & \vdots \\
0 & ... & 0  & 0 & \lambda_i
    \end{bmatrix} \in \mathbb{C}^{k_i \times k_i}
\end{equation}
is a bi-diagonal matrix with all $1$'s on the upper diagonal. 
Without loss of generality we may order the Jordan blocks so that $k_1 \geq k_2 \geq ... \geq k_m$. We note that if $k_1 = 1$ then $m = n$ and the corresponding Jordan form is diagonal. Any square matrix $D \in \mathbb{C}^{n \times n}$ may be written as: 
\begin{equation}
    D = SJS^{-1},
\end{equation}
for invertible $S$ \cite{golub2013}. Following the formulation above, we may rewrite the equation
\begin{equation}
    \dot{\boldsymbol{x}} = D\boldsymbol{x},
\end{equation}
as
\begin{equation}
    \dot{\boldsymbol{\eta}} = J\boldsymbol{\eta}.\label{eq:JordanDiffEq}
\end{equation}
\begin{theorem}
The solution to Eq. \ref{eq:JordanDiffEq} is given by: 
\begin{equation}
    \begin{bmatrix}
    \eta_1 \\
    \vdots \\
    \eta_{k_1} \\
    \eta_{(k_1+1)} \\
    \vdots \\
    \eta_{(k_1+k_2)} \\
    \vdots \\
    \eta_{(k_1+...+k_{m-1}+1)} \\
    \vdots \\
    \eta_n
    \end{bmatrix}
    =
    \begin{bmatrix}
    [\frac{c_{k_1}t^{k_1 - 1}}{(k_1-1)!}+\frac{c_{(k_1-1)}t^{k_1 - 2}}{(k_1-2)!}+... + c_1]e^{\lambda_1 t}\\
    \vdots\\
    c_{k_1}e^{\lambda_1 t}\\
    [\frac{c_{(k_1+k_2)}t^{k_2 - 1}}{(k_2-1)!}+\frac{c_{(k_1+k_2-1)}t^{k_2 - 2}}{(k_2-2)!}+... + c_{(k_1+1)}]e^{\lambda_2 t}\\
    \vdots \\
    c_{(k_1+k_2)}e^{\lambda_2 t}\\
    \vdots \\
    [\frac{c_{n}t^{k_m - 1}}{(k_m-1)!}+\frac{c_{n-1}t^{k_m - 2}}{(k_m-2)!}+... + c_{(k_1+...+k_{m-1}+1)}]e^{\lambda_m t}\\
    \vdots \\
    c_{n}e^{\lambda_m t}\\
    \end{bmatrix}
\end{equation}
\end{theorem}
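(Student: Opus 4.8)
The plan is to exploit the block-diagonal structure of $J$ to reduce the problem to a single Jordan block, and then to compute that block's matrix exponential explicitly using nilpotency. First I would observe that since $J=\mathrm{blockdiag}(J_1,\ldots,J_m)$, the flow $\eta(t)=e^{Jt}\eta(0)$ decomposes as $e^{Jt}=\mathrm{blockdiag}(e^{J_1 t},\ldots,e^{J_m t})$, so the components belonging to distinct blocks evolve independently and it suffices to treat one block $J_i\in\mathbb{C}^{k_i\times k_i}$ in isolation, exactly as the diagonalizable case already decoupled into scalar equations.

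Second, within a single block I would write $J_i=\lambda_i I+N_i$, where $N_i$ is the nilpotent shift matrix carrying $1$'s on the first superdiagonal and satisfying $N_i^{k_i}=0$. Because $\lambda_i I$ commutes with $N_i$, the exponential factors as $e^{J_i t}=e^{\lambda_i t}e^{N_i t}$, and nilpotency truncates the series to the finite sum $e^{N_i t}=\sum_{j=0}^{k_i-1}\frac{t^j}{j!}N_i^{j}$. The key computational step is then to identify the action of $N_i^{j}$: since $N_i$ shifts a vector up by one index, $(N_i^{j}\eta)_p=\eta_{p+j}$, so the $p$-th component of the block solution is $e^{\lambda_i t}\sum_{j=0}^{k_i-p}\frac{t^j}{j!}\,c_{p+j}$, where the constants $c$ are simply the entries of $\eta(0)$. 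Reading this off for $p=k_i$ down to $p=1$ reproduces, block by block, the displayed vector: the bottom component of each block is a pure $c\,e^{\lambda t}$, and each step upward adjoins one higher power of $t$, with $\frac{t^{k-1}}{(k-1)!}$ appearing in the top row. Reassembling the blocks in the order $k_1\geq k_2\geq\cdots\geq k_m$ yields the stated formula.

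The main obstacle is not analytic but combinatorial bookkeeping: one must be careful with the global-versus-local indexing (the displayed vector uses the running totals $k_1+\cdots+k_{j}$ as block boundaries, whereas the shift identity is cleanest in block-local coordinates) and must verify that the Toeplitz pattern of $N_i^{j}$ places the coefficient $\frac{t^j}{j!}$ in exactly the right entries. An entirely equivalent route, closer in spirit to the diagonalizable proof already given, is to solve each block by backward induction: the last equation $\dot\eta_{k_i}=\lambda_i\eta_{k_i}$ gives $c\,e^{\lambda_i t}$, and each preceding equation $\dot\eta_{p}=\lambda_i\eta_p+\eta_{p+1}$ is a first-order linear ODE whose integrating factor $e^{-\lambda_i t}$ turns it into $\frac{d}{dt}(e^{-\lambda_i t}\eta_p)=e^{-\lambda_i t}\eta_{p+1}$, so integrating the polynomial on the right raises its degree by one at each level; induction on $p$ then produces the factorials. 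Either way the argument is short once the single-block structure has been isolated.
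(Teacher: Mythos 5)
Your argument is correct, and your primary route differs from the paper's. The paper proves this theorem by backward substitution within each block: it solves the last equation $\dot\eta_{k_i}=\lambda_b\eta_{k_i}$ by separation of variables, then treats each preceding equation $\dot\eta_{p}=\lambda_b\eta_p+\eta_{p+1}$ as a first-order linear ODE solved by integrating factor, and obtains the factorial coefficients $\tfrac{t^{k-1}}{(k-1)!}$ by induction on the number of steps up the block --- i.e., exactly the ``entirely equivalent route'' you sketch in your final paragraph. Your main route instead writes $J_i=\lambda_i I+N_i$ and computes $e^{J_i t}=e^{\lambda_i t}\sum_{j=0}^{k_i-1}\tfrac{t^j}{j!}N_i^{j}$ directly from nilpotency, reading off $\eta_p(t)=e^{\lambda_i t}\sum_{j=0}^{k_i-p}\tfrac{t^j}{j!}c_{p+j}$ in one stroke. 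The exponential route buys you the entire block solution in closed form without induction, makes the origin of the factorials transparent (they are the Taylor coefficients of $e^{N_it}$), and identifies the constants as the entries of $\eta(0)$; its only cost is the indexing bookkeeping you correctly flag, plus the need to invoke the commuting-factor property $e^{(\lambda I+N)t}=e^{\lambda t}e^{Nt}$. The paper's route is more elementary and deliberately parallels its preceding proof for the diagonalizable case, at the price of an explicit induction. Both are complete; either would serve.
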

\begin{proof}
We may write the individual terms of Eq. \ref{eq:JordanDiffEq} as:

\begin{equation}
    \begin{bmatrix}
    \dot{\eta}_1\\
    \dot{\eta}_2 \\
    \vdots \\
    \dot{\eta}_{k_1}\\
    \dot{\eta}_{(k_1+1)}\\
    \vdots \\
    \dot{\eta}_{(k_1+k_2)}\\
    \vdots \\
    \dot{\eta}_{(k_1+...k_{m-1}+1)}\\
    \vdots \\
    \dot{\eta}_n
    \end{bmatrix}
=
    \begin{bmatrix}
    \lambda_1 \eta_1 + \eta_2\\
    \lambda_1 \eta_2 + \eta_3 \\
    \vdots\\
    \lambda_1 \eta_{k_1}\\
    \lambda_2 \eta_{(k_1+1)} + \eta_{(k_1+2)} \\
    \vdots \\
    \lambda_2 \eta_{(k_1+k_2)}\\
    \vdots \\
    \lambda_m \eta_{(k_1+...k_{m-1}+1)}+ \eta_{(k_1+...k_{m-1}+2)}\\
    \vdots \\
    \lambda_m \eta_n
    \end{bmatrix}
\end{equation}
Clearly this situation is more complicated than the diagonalizable case, as the solution to one differential equation in any Jordan block is related to the solution of the next differential equation. However the final differential equation of any block may be solved with separation of variables, as it depends only on a single variable. Thus we may build the solution for a particular block by noting the solution to the final equation of the block is given by:
\begin{equation}
    \eta_i = c_i e^{\lambda_b t}, \label{eq:Eta1}
\end{equation}
where $b \in \{1,2,...,m\}$ is the block number. Now the previous equation in the block is dependent on the solution to the final equation in the block and so we get a differential equation:
\begin{equation}
    \dot{\eta}_{i-1} = \eta_{i-1}\lambda_b + \eta_i = \eta_{i-1}\lambda_b + c_i e^{\lambda_b t}, \label{eq:Eta2}
\end{equation}
with the equality on the right determined after substitution for $\eta_i$ from Eq. \ref{eq:Eta1}. Equation \ref{eq:Eta2} may be solved by integrating factor, giving a solution of
\begin{equation}
    \eta_{i-1} = (c_i t + c_{i-1})e^{\lambda_b t}. \label{eq:Eta3}
\end{equation}
We may now work our way further up the block by substituting Eq. \ref{eq:Eta3} into the previous equation obtaining:
\begin{equation}
    \dot{\eta}_{i-2} =  \eta_{i-2}\lambda_b + (c_i t + c_{i-1})e^{\lambda_b t}, \label{eq:Eta4}
\end{equation}
we again may obtain solution to the above equation by the integrating factor method and find the solution to be:
\begin{equation}
    \eta_{i-2} = (c_i \frac{t^2}{2} + c_{i-1}t + c_{i-2})e^{\lambda_b t}. \label{eq:Eta5}
\end{equation}
We may now apply induction to find the solution to all equations in the block, which will be:
\begin{equation}
    \eta_{i-k} = (c_i \frac{t^{k-1}}{(k-1)!} + c_{i-1}\frac{t^{k-2}}{(k-2)!} + ...+ c_{i-k})e^{\lambda_b t}. \label{eq:Eta6}
\end{equation}
Applying Eq. \ref{eq:Eta6} to all of the Jordan blocks completes the proof.
\end{proof}


\begin{thebibliography}{00}

\bibitem{newman2018}
M. Newman {\it Networks: an introduction} Oxford University Press, {\bf (2018)}
\bibitem{arenas2008}
A. Arenas, A. D{\'\i}az-Guilera, J. Kurths, Y. Moreno, C. Zhao {\it Synchronization in complex networks} Physics reports 469, 93-153 {\bf (2008)}
\bibitem{pikovsky2001}
A. Pikovsky, M. Rosenblum, J. Kurths {Synchronization} Cambridge University Press, {\bf (2001)}
\bibitem{blasius2000}
B. Blasius, L. Stone {\it Chaos and phase synchronization in ecological systems} International Journal of Bifurcation and Chaos 10, 2361-2380 {\bf (2000)}
\bibitem{epstein1996}
I. R. Epstein, K. Showalter {\it Nonlinear chemical dynamics: oscillations, patterns, and chaos} The Journal of Physical Chemistry 100, 13132-13147 {\bf (1996)}
\bibitem{javedan2002}
S. P. Javedan, R. S. Fisher, H. G. Eder, K. Smith, J. Wu {\it Cooling abolishes neuronal network synchronization in rat hippocampal slices} Epilepsia 43, 574-580 {\bf (2002)}
\bibitem{nair2008}
S. Nair, N. E. Leonard {\it Stable synchronization of mechanical system networks} SIAM Journal on Control and Optimization 47, 661-683 {\bf (2008)}
\bibitem{blaabjerg2006}
F. Blaabjerg, R. Teodorescu, M. Liserre, A. V. Timbus {\it Overview of control and grid synchronization for distributed power generation systems} IEEE Transactions on industrial electronics 53, 1398-1409 {\bf (2006)}
\bibitem{menck2014}
P. J. Menck, J. Heitzig, J. Kurths, H. J. Schellnhuber {\it How dead ends undermine power grid stability} Nature Communications 5, 1-8 {\bf (2014)}
\bibitem{nishikawa2015}
T. Nishikawa, A. E. Motter {\it Comparative analysis of existing models for power-grid synchronization} New Journal of Physics 17, 015012 {\bf (2015)}
\bibitem{chen2007}
C. C. Chen et. al., {\it Excessive synchronization of basal ganglia neurons at 20{\'a}Hz slows movement in Parkinson's disease} Experimental neurology 205, 214-221 {\bf (2007)}
\bibitem{hammond2007}
C. Hammond, H. Bergman, P. Brown {\it Pathological synchronization in Parkinson's disease: networks, models and treatments} Trends in neurosciences 30, 357-364 {\bf (2007)}
\bibitem{pecora1998}
L. M. Pecora, T. L. Carroll {\it Master stability functions for synchronized coupled systems} Physical review letters 80, 2109 {\bf (1998)}
\bibitem{nishikawa2006a}
T. Nishikawa, A. Motter {\it Synchronization is optimal in nondiagonalizable networks} Physical Review E 73, 065106 {\bf (2006)}
\bibitem{nishikawa2006b}
T. Nishikawa, A. E. Motter {\it Maximum performance at minimum cost in network synchronization} Physica D: Nonlinear Phenomena 224, 77-89 {\bf (2006)}
\bibitem{illing2002}
L. Illing, J. Br{\"o}cker, L. Kocarev, U. Parlitz, H. D. I. Abarbanel {\it When are synchronization errors small?} Physical Review E 66, 036229 {\bf (2002)}
\bibitem{jadbabaie2013}
A. Jadbabaie, A. Olshevsky {\it Combinatorial bounds and scaling laws for noise amplification in networks} European Control Conference (ECC) 2013, 596-601 {\bf (2013)}
\bibitem{fish2017}
J. Fish, J. Sun {\it Construction, sensitivity index, and synchronization speed of optimal networks} Journal of Coupled Systems and Multiscale Dynamics 5, 86-94 {\bf (2017)}
\bibitem{asllani2018}
M. Asllani, R. Lambiotte, T. Carletti {\it Structure and dynamical behavior of non-normal networks} Science Advances 4, eaau9403 {\bf (2018)}
\bibitem{asllani2018top}
M. Asllani, T. Carletti {\it Topological resilience in non-normal networked systems} Physical Review E. 97, 042302 {\bf (2018)}
\bibitem{muolo2019}
R. Muolo, M. Asllani, D. Fanelli, P. K. Maini, T. Carletti {\it Patterns of non-normality in networked systems} Journal of Theoretical Biology 480, 81-91 {\bf (2019)}
\bibitem{zankoc2019}
C. Zankoc, D. Fanelli, F. Ginelli, R. Livi {\it Desynchronization and pattern formation in a noisy feed-forward oscillator network} Physical Review E 99, 021303 {\bf (2019)}
\bibitem{muolo2021}
R. Muolo, T. Carletti, J. P. Gleeson, M. Asllani {\it Synchronization dynamics in non-normal networks: the trade-off for optimality} Entropy 23, 36-48 {\bf (2021)}
\bibitem{nishikawa2021comment}
T. Nishikawa, A. E. Motter, L. M. Pecora {\it Comment on" Synchronization dynamics in non-normal networks: the trade-off for optimality} arXiv {\bf (2021)}
\bibitem{muolo2021comment}
R. Muolo, T. Carletti, J. P. Gleeson, M. Asllani {\it Reply to Comment on" Synchronization dynamics in non-normal networks: the trade-off for optimality"} arXiv {\bf (2021)}
\bibitem{sorrentino2022comment}
F. Sorrentino, C. Nathe {\it Comment on" Synchronization dynamics in non-normal networks: the trade-off for optimality"}
arXiv {\bf (2022)}
\bibitem{trefethen2005}
L.N. Trefethen, M. Embree {\it Spectra and pseudospectra: the behavior of nonnormal matrices and operators} Princeton University Press, {\bf (2005)}
\bibitem{blakely2000}
J. N. Blakely, D. J. Gauthier, G. Johnson, T. L. Carroll, L. M. Pecora {\it Experimental investigation of high-quality synchronization of coupled oscillators} Chaos: An Interdisciplinary Journal of Nonlinear Science 10, 738-744 {\bf (2000)}
\bibitem{perko2001}
L. Perko {\it Differential Equations and
Dynamical Systems} Springer {\bf (2001)}
\bibitem{horn1985}
R. A. Horn, C. R. Johnson {\it Matrix Analysis} Cambridge University Press,  {\bf (1985)}
\bibitem{golub2013}
G. H. Golub, C. F. Van Loan {\it Matrix Computations} Johns Hopkins University Press {\bf (2013)}
\bibitem{halmos1963}
P. R. Halmos {\it What does the spectral theorem say?} The American Mathematical Monthly 70, 241-247 {\bf (1963)}
\bibitem{trefethen1997}
L.N. Trefethen {\it Pseudospectra of linear operators} SIAM Review 39, 383-406 {\bf (1997)}
\bibitem{nishikawa2010}
T. Nishikawa, A. E. Motter {\it Network synchronization landscape reveals compensatory structures, quantization, and the positive effect of negative interactions } Proceedings of the National Academy of Sciences 107, 10342-10347 {\bf (2010)}
\bibitem{ravoori2011}
B. Ravoori, A. B. Cohen, J. Sun, A. E. Motter, T. E. Murphy, R. Rajarshi {\it Robustness of optimal synchronization in real networks} Physical review letters 107, 034102 {\bf (2011)}
\bibitem{boccara1990}
N. Boccara {\it Functional Analysis: An introduction for physicists} Academic Press, {\bf (1990)} 
\bibitem{fiedler1967}
M. Fiedler {\it Algebraic connectivity of graphs} Czechoslovak mathematical journal 23, 298-305 {\bf (1967)}
\bibitem{henrici1962}
P. Henrici {\it Bounds for iterates, inverses, spectral variation and fields of values of non-normal matrices} Numerische Mathematik  4, 24-40 {\bf (1962)}
\bibitem{menck2013}
P. J. Menck, J. Heitzig, N. Marwan, J. Kurths {\it How basin stability complements the linear-stability paradigm} Nature Physics 9, 89-92 {\bf (2013)}
\bibitem{schultz2017}
P. Schultz, P. J. Menck, J. Heitzig, J. Kurths {\it Potentials and limits to basin stability estimation} New Journal of Physics 19, 023005 {\bf (2017)}
\bibitem{prigogine1968}
I. Prigogine, R. Lefever {\it Symmetry breaking instabilities in dissipative systems} The Journal of Chemical Physics 48, 1695-1700 {\bf (1968)}
\bibitem{rossler1976}
O. E. R{\"o}ssler {\it An equation for continuous chaos} Physics Letters A 57, 397-398 {\bf (1976)}
\bibitem{huang2009generic}
L. Huang, Q. Chen, Y. C. Lai, L. M. Pecora {\it Generic behavior of master-stability functions in coupled nonlinear dynamical systems} Physical Review E 80, 036204 {\bf (2009)}
\bibitem{van1975study}
C. Van Loan {\it A study of the matrix exponential} Manchester Institute for Mathematical Sciences, University of Manchester {\bf (1975)}
\bibitem{fish2022soft}
https://github.com/jefish003/LaplacianPseudospectra 
\bibitem{zill2012}
D. G. Zill {\it A first course in differential equations with modeling applications} Cengage Learning {\bf (2012)}
\end{thebibliography}

\end{document}